\newcommand{\PAPER}[1]{#1}
\newcommand{\LIPICS}[1]{}
\newtheorem{theorem}{Theorem}[section]
\newtheorem{lemma}[theorem]{Lemma}
\newtheorem{corollary}[theorem]{Corollary}
\newtheorem{problem}{Problem}
\newcommand{\R}{\mathbb{R}}
\newcommand{\down}[1]{\left\lfloor #1\right\rfloor}
\newcommand{\up}[1]{\left\lceil #1\right\rceil}
\newcommand{\QQ}{{\cal Q}}
\newcommand{\BB}{{\cal B}}
\newcommand{\SSS}{{\cal S}}
\newcommand{\nG}{n_0}%{n_G}
\title{Minimum $L_\infty$ Hausdorff Distance of Point Sets Under Translation: Generalizing Klee's Measure Problem}
\author{Timothy M. Chan\thanks{Department of Computer Science,
University of Illinois at Urbana-Champaign (tmc@illinois.edu).  Work supported in part by NSF Grant CCF-2224271. }}
\author{Timothy M. Chan}{Department of Computer Science,
University of Illinois at Urbana-Champaign, USA}{tmc@illinois.edu}
{https://orcid.org/0000-0002-8093-0675}{Work supported in part by NSF Grant CCF-2224271.}
\titlerunning{Minimum $L_\infty$ Hausdorff Distance of Point Sets Under Translation}
\authorrunning{T.\,M. Chan}
\keywords{Hausdorff distance, geometric optimization, Klee's measure problem,
fine-grained complexity}
\renewcommand{\paragraph}[1]{\subparagraph{#1}}
\begin{document}
\sloppy
\maketitle

\begin{abstract}
We present a (combinatorial) algorithm with running time close to $O(n^d)$ for computing the minimum directed $L_\infty$ Hausdorff distance between two sets of $n$ points under translations in any constant dimension $d$.  This substantially improves the best previous time bound near $O(n^{5d/4})$ by Chew, Dor, Efrat, and Kedem from more than twenty years ago.  Our solution is obtained by a new generalization of Chan's algorithm [FOCS'13] for Klee's measure problem.

To complement this algorithmic result, we also prove a nearly matching conditional lower bound close to $\Omega(n^d)$ for combinatorial algorithms, under the Combinatorial $k$-Clique Hypothesis.
\end{abstract}

\section{Introduction}\label{sec:intro}

\newcommand{\hhh}{\vec{h}}
This paper is about the following problem:

\begin{problem}\label{prob0}
\emph{($L_\infty$ Translational Hausdorff)}\ \
Given a set $P$ of $n$ points and a set $Q$ of $m$ points in $\R^d$,
compute
the minimum directed $L_\infty$ Hausdorff distance from $P$ to $Q$
under translation, i.e., compute $\min_{v\in\R^d} \hhh_\infty(P+v,Q)$
where $\hhh_\infty(P,Q):=\max_{p\in P}\min_{q\in Q}
\|p-q\|_\infty$.
\end{problem}

The problem has been extensively studied in computational geometry 
in the 1990s.  The analogous problem for \emph{undirected} Hausdorff distance
(defined
as $h_\infty(P,Q)=\max\{\hhh_\infty(P,Q),$ $\hhh_\infty(Q,P)\}$)
is reducible~\cite{Nusser21} to the directed version if $m=\Theta(n)$.  
The motivation lies in
measuring the resemblance between two geometric objects represented as
point clouds; furthermore, a connection with an even more fundamental
problem, Klee's measure problem
(see next page), provides added theoretical interest (and is what attracted
this author's attention in the first place).
Huttenlocher and Kedem~\cite{HuttenlocherK90}
introduced the problem and presented the first algorithms for $d=2$ (a subsequent paper~\cite{HuttenlocherKS93} also examined variants in $L_2$).
Chew and Kedem~\cite{ChewK98} gave an improved algorithm with
running time $O(mn\log^2(mn))$ for $d=2$ (in $L_\infty$), and generalized the algorithm to 
any constant dimension $d$
with running time $O((mn)^{d-1}\log^2(mn))$.
Chew, Dor, Efrat, and Kedem~\cite{ChewDEK99} 
%concentrated on the main case of $m=n$, and 
described
further improved algorithms in higher dimensions: in the main case $m=n$,
their time bounds
were $O(n^3\log^2 n)$ for $d=3$, 
%$O(n^{14/3}\log^2 n)$ for $d=4$,
%$O(n^6\log^2 n)$ for $d=5$,
%$O(n^{22/3}\log^2n)$ for $d=6$,
%$O(n^{26/3}\log^2n)$ for $d=7$,
$O(n^{(4d-2)/3}\log^2n)$ for $4\le d\le 7$, and
$O(n^{5d/4}\log^2n)$ for any constant $d\ge 8$.  The exponent $5d/4$ looks peculiar, and
naturally raises the question of whether further improvements are still possible, but
none has been found in the intervening two decades  (except in
the logarithmic factors~\cite{Chan99,Chan13}).

Many other variants of Problem~\ref{prob0} have also been considered, for example,
%using the undirected Hausdorff distance (defined
%as $h_\infty(P,Q)=\max\{\hhh_\infty(P,Q),\hhh_\infty(Q,P)\}$),
using other metrics such as $L_2$ (as already mentioned above), 
allowing rotation and/or scaling besides translation,
handling other objects besides points,
allowing approximations, etc. (e.g., see \cite{ChewGHKKK97,AichholzerAR97,GoodrichMO99,IndykMV99,EfratIV04,ChoM08,AgarwalHSW10}). Several other alternatives
to the Hausdorff distance have also been popularly studied in computational geometry, such as the Earth mover distance and the Fr\'echet distance.  We will ignore all these
variants in the present paper, focusing only on exact directed $L_\infty$ Hausdorff
distance for point sets under translation.

Our new result is an algorithm for Problem~\ref{prob0} running in $O(n^d(\log\log n)^{O(1)})$ time
(using randomization) 
for any constant $d\ge 3$ in the main $m=n$ case (or with one extra logarithmic factor if randomization is not allowed).  The exponent $d$ is thus a \emph{substantial} improvement
over Chew et al.'s previous exponents for \emph{every} $d\ge 4$; see
Table~\ref{tbl1}.
In the general case, the running time of our algorithm is $O((mn)^{d/2}(\log\log(mn))^{O(1)})$.

\begin{table}
\begin{tabular}{l|llllllllllll}
dimension &$2$&$3$&$4$&$5$&$6$&$7$&$8$&$9$&$10$&$11$&$12$&$\cdots\!\!$\\\hline
prev.\ bound &$n^2$&$n^3$& $n^{4.66\cdots}\!\!$ & $n^6$
& $n^{7.33\cdots}\!\!$ & $n^{8.66\cdots}\!\!$ & $n^{10}$ & $n^{11.25}\!\!$
& $n^{12.5}\!$ & $n^{13.75}\!\!$ & $n^{15}$ &$\cdots\!\!$ \rule{0pt}{1.3em}\\
new bound & & &$n^4$&$n^5$&$n^6$&$n^7$&$n^8$&$n^9$&$n^{10}$
&$n^{11}$&$n^{12}$&$\cdots\!\!$
\end{tabular}
\caption{Previous upper bounds~\cite{ChewDEK99} and
new upper bounds for Problem~\ref{prob0} in the $m=n$ case, ignoring polylogarithmic factors.}
\label{tbl1}
\end{table}

\paragraph{Connection to a generalized Klee's measure problem.}
It suffices to focus on the decision problem: deciding whether the minimum is at most a given value $r$.
The original problem reduces to the decision problem, at the expense of
one extra logarithmic factor in the running time
by a well-known technique of Frederickson and Johnson~\cite{FredericksonJ84}
(in fact, when $d\ge 4$, a standard binary search suffices,
since the optimal value lies in a universe of $O((mn)^2)$ possible values which we can explicitly enumerate).  In some cases,
the extra logarithmic factor can even be eliminated by a randomized optimization technique~\cite{Chan99}.

Equivalently, we want to decide
whether there exists a vector $v\in\R^d$ with
$P+v\subseteq Q+[-r,r]^d$ (where ``$+$'' denotes the Minkowski sum when it
is clear from the context).  Assume (by rescaling) that $r=1/2$.
Let $\QQ$ be the set of unit hypercubes $\{q+[-1/2,1/2]^d: q\in Q\}$,
and let $S^*:= \bigcup_{B\in \QQ} B$.
The condition is equivalent to $P+v\subseteq S^*$,
i.e., $v\in \bigcap_{p\in P} (S^*-p)$.  Thus, the decision problem is equivalent to
the following:

\begin{problem}\label{prob1}
\emph{($L_\infty$ Translational Hausdorff Decision)}\ \ 
Given a set $P$ of $n$ points and
a set $\QQ$ of $m$ unit hypercubes\footnote{Throughout this paper, all hypercubes and boxes are
axis-aligned.} in $\R^d$,
decide whether $\bigcap_{p\in P} (S^*-p) = \emptyset$, where $S^*:=\bigcup_{B\in\QQ} B$.
\end{problem}

For each $B\in \QQ$ and $p\in P$, create a new unit hypercube $B-p$
and give this hypercube the color $p$.  Problem~\ref{prob1} then immediately reduces to the following
problem on $N = mn$ colored unit hypercubes: decide whether
$\bigcap_\chi S_\chi=\emptyset$, where $S_\chi := \bigcup_{\mbox{\scriptsize\rm $B\in\BB$ with color $\chi$}} B$.  (In other words, we want to decide whether
there exists a ``colorful'' point that lies in hypercubes of all colors.)

The unit hypercube case in turn reduces to the case of \emph{orthants} (i.e., $d$-sided boxes which are
unbounded in one direction along each axis):
we can build a uniform grid of unit-side length and solve the subproblem
inside each grid cell, but inside
a grid cell, a unit hypercube is identical to an orthant.  (We can ignore
grid cells that do not intersect hypercubes of all colors.)
Since a unit hypercube intersects only $O(1)$ grid cells, 
these subproblems have total input size $O(N)$.
All this motivates the definition of the following problem(s) on colored orthants:

\begin{problem}\label{prob2}
\emph{(Generalized Klee's Measure Problem)}\ \ 
Given a set $\BB$ of $N$ colored orthants in $\R^d$,

\begin{enumerate}
\item[\rm(a)]
decide whether $\bigcap_\chi S_\chi=\emptyset$,
% (i.e., $\bigcup_\chi \overline{S_\chi} = \R^d$),
\item[\rm(b)]
or more generally, compute a point of maximum or minimum depth among
the $S_\chi$'s (i.e., a point in the most or least number of regions $S_\chi$),
\item[\rm(c)] 
or alternatively, compute the volume of $\bigcap_\chi S_\chi$,
\end{enumerate}

\noindent
where $S_\chi := \bigcup_{\mbox{\scriptsize\rm $B\in\BB$ with color $\chi$}} B$.
\end{problem}

To recap, if Problem~\ref{prob2}(a) 
%for colored orthants 
can be solved in $T(N)$ time, then Problem~\ref{prob1} can automatically be solved in 
$O(T(N))=O(T(mn))$ time (assuming superadditivity of $T(\cdot)$).

Problem~\ref{prob2}(a) is a generalization of the \emph{box coverage problem}: determine whether the union of a set of $N$ boxes in $\R^d$ covers
the entire space.  This is because $\bigcap_\chi S_\chi=\emptyset$ iff
$\bigcup_\chi \overline{S_\chi} = \R^d$, and
a box can be expressed as the complement of a union of at most $2d$ orthants
(we use $\overline{S}$ to denote the complement of a set $S$).
  Similarly, Problem~\ref{prob2}(b) is a generalization of the
\emph{box depth problem}: determine the minimum or maximum depth among
$N$ boxes in $\R^d$.  
Problem~\ref{prob2}(c) is a generalization
of \emph{Klee's measure problem}: compute the volume of the union of 
a set of $N$ boxes in $\R^d$.  
This generalization allows us to compute the volume of the union of more general shapes,
so long as each shape can be expressed as the complement of a union of orthants.
 (Note that we can clip to a bounding box to ensure that the volume is finite.)

The original Klee's measure problem has been extensively studied in computational
geometry \cite{Agarwal10,Bringmann12,Chan10,Chan13,OvermarsY91,YildizS15}.
The best known algorithm by Chan~\cite{Chan13} for Klee's measure
problem runs in $O(N^{d/2})$ time, based on a clever but simple
divide-and-conquer.  The box coverage and box depth
problem can be solved by similar algorithms, and in fact with slightly
lower time bounds by polylogarithmic factors using table lookup
and bit packing tricks~\cite{Chan13}.

For $d\le 3$, a union of orthants has linear combinatorial complexity~\cite{BoissonnatSTY95}
and can be constructed in near linear time.
Thus, a straightforward way to solve Problem~\ref{prob2} is to first construct
all the regions $S_\chi$ explicitly, decompose each $\overline{S_\chi}$
as a union of disjoint boxes, and then run a known algorithm
for Klee's measure problem on the resulting $O(N)$ boxes.
With this approach, Problem~\ref{prob2} can be solved in
$O(N\log N)$ time for $d=2$,
and $O(N^{3/2})$ time for $d=3$; consequently,
Problem~\ref{prob1} can be solved in $O((mn)\log(mn))$ time for $d=2$,
and $O((mn)^{3/2})$ time for $d=3$.  This was essentially how the
previous known 2D and 3D algorithms by Chew and Kedem~\cite{ChewK98}
and Chew et al.~\cite{ChewDEK99} were designed.

However, for $d\ge 4$, a union of $N$ orthants may have $\Theta(N^{\down{d/2}})$ combinatorial complexity in the worst case~\cite{BoissonnatSTY95}.  So, a two-stage
approach that explicitly constructs all the regions $S_\chi$ and then invokes
an algorithm for Klee's measure problem would be too slow!
Chew et al.~\cite{ChewDEK99} adapted
Overmars and Yap's algorithm for Klee's measure problem~\cite{OvermarsY91}
in a nontrivial way to obtain an $O(N^{5d/8}\log N)$-time algorithm
for Problem~\ref{prob2}, and consequently an $O(n^{5d/4}\log n)$-time algorithm
for Problem~\ref{prob1} when $m=n$.

We present a new algorithm that solves Problem~\ref{prob2}(c) in
$O(N^{d/2}\log^{d/2}N)$ time, matching
the known time bound 
for the original Klee's measure problem 
up to logarithmic factors.
In fact, for Problem~\ref{prob2}(a,b), the polylogarithmic factor
can be lowered to poly-$\log\log N$ factors using table lookup and
bit packing tricks.  Consequently, we obtain an $O(n^d (\log\log n)^{O(1)})$
time bound  for Problem~\ref{prob1} when $m=n$, or $O((mn)^{d/2}(\log\log(mn))^{O(1)})$ in general.
Our result is obtained by directly modifying Chan's divide-and-conquer algorithm for
Klee's measure problem~\cite{Chan13}.  The adaptation is not straightforward and uses interesting new ideas.  As mentioned,
we cannot afford to separate into two stages.  Instead, within
a single recursive process, we will 
handle two types of objects simultaneously, (i)~the input orthants, and
(ii)~features of the regions $S_\chi$ that have been found during the process.
The analysis of the recurrence is more delicate (though the overall algorithm remains simple).

\paragraph{Conditional lower bounds.}
In the other direction, recently in SoCG'21,
Bringmann and Nusser~\cite{BringmannN21}
proved an $\Omega((mn)^{1-\delta})$ conditional lower bound for Problem~\ref{prob0}--\ref{prob1} for $d=2$ for an arbitrarily small constant $\delta>0$, under the Orthogonal Vectors (OV) Hypothesis~\cite{virgisurvey}
(in particular, it holds under the Strong Exponential-Time Hypothesis (SETH)~\cite{virgisurvey}).
This showed that Chew and Kedem's upper bound for $d=2$ is likely near optimal~\cite{ChewK98}.  However, Bringmann and Nusser did not obtain any
lower bound in higher dimensions.

As observed by Chan~\cite{Chan10},
Klee's measure problem and the box coverage problem have 
an $\Omega(N^{d/2-\delta})$ lower bound for combinatorial algorithms
under the \emph{Combinatorial $k$-Clique Hypothesis}, which states that
there is no $O(\nG^{k-\delta})$-time combinatorial algorithm for
detecting a $k$-clique in a graph with $\nG$ vertices, for any constant $k\ge 3$.
The notion of ``combinatorial'' algorithms is not mathematically well-defined, but intuitively it refers to algorithms that avoid the use of fast matrix multiplication (such as Strassen's algorithm); all algorithms in this paper
and in Chew et al.'s previous paper fulfill this criterion.
(Recently, K\"{u}nnemann~\cite{Kun22} obtained new lower bounds 
for arbitrary, noncombinatorial algorithms for Klee's measure problem 
under the ``$k$-Hyperclique Hypothesis'', but
his bounds are not tight for $d\ge 4$.  See also \cite{CabelloGK08,BarequetH01,AronovC20}
for conditional lower bounds for other related geometric problems.
See \cite{JinX22} for a recent example of the usage of the Combinatorial $k$-Clique Hypothesis
in computational geometry, and
\cite{AbboudBW18,BringmannGL17} for other examples involving the Combinatorial $k$-Clique Hypothesis outside of geometry.)

Since our algorithms for Problem~\ref{prob2} have near $N^{d/2}$ running time,
they are near optimal for combinatorial algorithms under the
Combinatorial $k$-Clique Hypothesis.
However, this does not necessarily imply optimality of our algorithms for Problem~\ref{prob0} or~\ref{prob1}. 

In the second part of this paper, we prove that Problems \ref{prob0}--\ref{prob1}
have a conditional lower bound of $\Omega(n^{d-\delta})$ for 
$m=n$, or $\Omega((mn)^{d/2-\delta})$ for $m=n^\gamma$ for any fixed
$\gamma\le 1$, for combinatorial algorithms under the
Combinatorial $k$-Clique Hypothesis.  This shows that our combinatorial algorithm for Problem~\ref{prob0} is also conditionally near optimal.
While the previous conditional lower bound for Klee's measure problem by Chan~\cite{Chan10} was
obtained by reduction from $d$-clique,
we will reduce from clique of arbitrarily large constant size.
Our new reduction is more challenging and more interesting, but still simple.

\section{Algorithm}

\newcommand{\AAA}{{\cal E}}
\newcommand{\aaa}{E}

In this section, we present our new algorithm for the generalized Klee's measure problem (Problem~\ref{prob2})
for any constant dimension $d\ge 4$.
From this result, new algorithms for Problems~\ref{prob0}--\ref{prob1} will immediately follow.

To solve Problem~\ref{prob2}(c), we solve a generalization, where
we are given a box ``cell'' $\gamma$ and an extra set $\AAA$ of boxes, 
and we want to compute
the volume of $\bigcap_{\chi} S_\chi \cap\bigcap_{\aaa\in\AAA} \overline{\aaa}\cap \gamma$.  Initially, $\gamma=\R^d$ and $\AAA=\emptyset$.
We assume that the coordinates of the input have been pre-sorted (this
requires only an initial $O(N\log N)$ cost).

Call an orthant or a box \emph{short} if some of its $(d-2)$-faces intersect
$\gamma$'s interior, \emph{long} if it intersects $\gamma$'s interior
but is not short, and \emph{trivial} if it does not intersect $\gamma$'s
interior or it completely contains $\gamma$.

Our algorithm is inspired by Chan's divide-and-conquer algorithm~\cite{Chan13}
for the original Klee's measure problem, 
with many similarities (for example, in how we use weighted medians to divide a cell)
but also major new innovation (in how we reduce the number of long objects, and how
we ``convert'' some objects of $\BB$ into new objects in $\AAA$ during recursion).
The analysis of our algorithm requires a new charging argument and recurrence, causing some extra logarithmic factors.

\paragraph{Defining weights.}
Consider a $(d-2)$-face $f$ of a short orthant of $\BB$ such that
$f$ intersects $\gamma$'s interior.  If $f$ is orthogonal to 
the $i$-th and the $j$-th axes, assign $f$ a weight of 
$2^{(i+j)/d}$.  
Note that this weight is $\Theta(1)$, and so
each short orthant of $\BB$ contributes a total weight of $\Theta(1)$.

Similarly, consider a $(d-2)$-face $f$ of a box of $\AAA$ such that
$f$ intersects $\gamma$'s interior.  If $f$ is orthogonal to 
the $i$-th and the $j$-th axes, assign $f$ a weight of 
$2^{(i+j)/d}/t$, where $t\ge 1$ is a parameter to be set later.
Note that this weight is $\Theta(1/t)$, and so each short box of $\AAA$ contributes a total weight of
$\Theta(1/t)$.

\newcommand{\Nlong}{N_{\mbox{\scriptsize\rm long}}}
\newcommand{\Wshort}{W_{\mbox{\scriptsize\rm short}}}

Let $T(\Nlong,\Wshort)$ denote the time complexity
of the problem, where 
$\Nlong$ denotes the total number of long and trivial
orthants in $\BB$ and long and trivial boxes in $\AAA$,
and $\Wshort$ denotes
the total weight of all short orthants in $\BB$ and 
short boxes in $\AAA$. 
Note that the total number of orthants in $\BB$ and 
boxes in $\AAA$ is upper-bounded by $O(\Nlong+t\Wshort)$.

\begin{figure}
\centering
\includegraphics[scale=0.7]{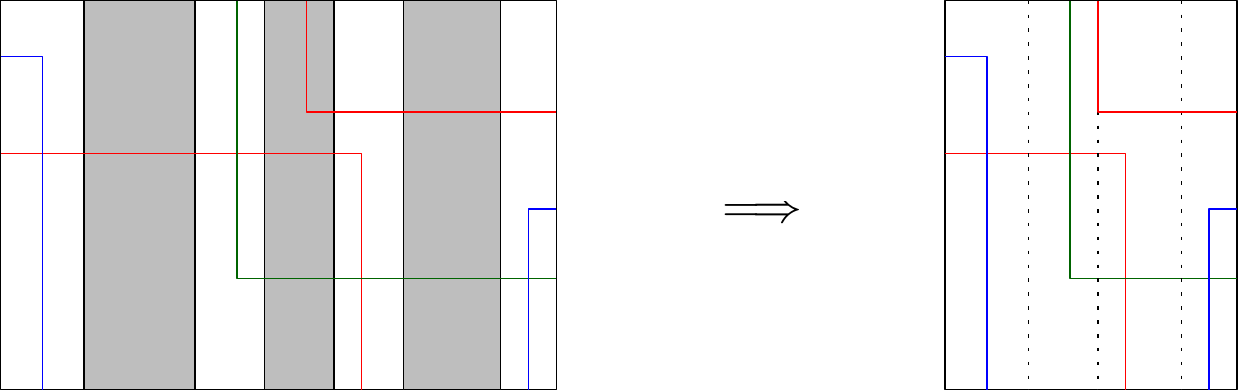}
\caption{Shrinking the width of the shaded slabs to 0, to eliminate long boxes in $\AAA$.}\label{fig:klee}
\end{figure}

\paragraph{Reducing the number of long objects.}
First, the trivial orthants and boxes can be easily eliminated:
We can remove all orthants of $\BB$ and boxes of $\AAA$ that do not intersect $\gamma$'s interior.
If an entire color class of $\BB$ does not intersect $\gamma$'s interior, or if some box of $\AAA$ completely contains $\gamma$,
we can return 0 as the answer.
If there is an orthant of $\BB$ completely containing $\gamma$, we can remove
its color class from $\BB$.

For each color~$\chi$, consider the long orthants of $\BB$ with color
$\chi$; the union of these long orthants are defined by
at most $2d$ orthants (since it is the complement of a box with at most $2d$ sides).  
Keep these $O(1)$ long orthants per color,
and remove the rest.
If there is a long orthant with color $\chi$ but
no short orthant with that color, then
$\overline{S_\chi}\cap\gamma$ is a box---add this box to $\AAA$
and remove the color class from $\BB$ (in other words, we have ``converted'' 
an entire color class in $\BB$ into a single box in $\AAA$).
This step increases $\Wshort$ by at most $\Nlong/t$.
After this step, each remaining long orthant can be ``charged'' to a short orthant of the same color, and so 
the number of remaining long orthants of $\BB$
is bounded by $O(1)$ times the number of short orthants, which is $O(\Wshort)$.

Next, for each $i\in\{1,\ldots,d\}$, consider the long boxes of $\AAA$
having $(d-1)$-faces intersecting $\gamma$ 
that are orthogonal to the $i$-th axis.  Compute the
union of these boxes by a linear scan after sorting, since this corresponds to computing the union of 1D intervals
when projected to the $i$-th axis.  The union forms a disjoint collection
of slabs.  Readjust all the $i$-th coordinates to
shrink the width of these slabs to 0, without altering
the volume of $\bigcap_{\chi} S_\chi \cap\bigcap_{\aaa\in\AAA} \overline{\aaa}\cap \gamma$, as illustrated in Figure~\ref{fig:klee}.
After doing this successively for every $i\in\{1,\ldots,d\}$,
all long boxes of $\AAA$ are eliminated.

After this process, there are $O(\Wshort)$ remaining long orthants of $\BB$ and zero long boxes of $\AAA$.  Thus, $\Nlong$ is reduced to $O(\Wshort)$.  The weight of the short orthants of $\BB$ may increase to at most $\Wshort+\Nlong/t$.  We then have
the following, for some constant~$c$:
\begin{equation}\label{eqn:reduce}
T(\Nlong,\Wshort) \:\le\: T(c\Wshort, \Wshort+\Nlong/t) + O(\Nlong + t\Wshort).
\end{equation}

\paragraph{Divide-and-conquer.}
Next, compute the weighted median $m$ of the first coordinates
of the $(d-2)$-faces of $\BB$ and $\AAA$ intersecting $\gamma$'s interior
that are orthogonal to the first axis.
Divide $\gamma$ into two subcells $\gamma_L$ and $\gamma_R$
by the hyperplane $\{(x_1,\ldots,x_d): x_1=m\}$.
Renumber the coordinate axes $1,2,\ldots,d$ to $2,\ldots,d,1$,
and recursively solve the problem for $\gamma_L$ and for $\gamma_R$.

To analyze the algorithm, consider 
a $(d-2)$-face $f$ of $\BB$ (resp.\ $\AAA$) orthogonal to the $i$-th and $j$-th axes
with $i,j\neq 1$.  After the axis renumbering, its weight
changes from $2^{(i+j)/d}$ to $2^{(i-1+j-1)/d}$ (resp.\ from $2^{(i+j)/d}/t$ to $2^{(i-1+j-1)/d}/t$), i.e., the weight
decreases by a factor of $2^{2/d}$.  

Next consider 
a $(d-2)$-face $f$ of $\BB$ (resp.\ $\AAA$) orthogonal to the first and the $j$-th axes
with $j\neq 1$.  After the axis renumbering, its weight
changes from $2^{(1+j)/d}$ to $2^{(d+j-1)/d}$ (resp.\ from $2^{(1+j)/d}/t$ to $2^{(d+j-1)/d}/t$), i.e., the weight
increases by a factor of $2^{(d-2)/d}$.
But when $\gamma$ is divided into subcells $\gamma_L$ and $\gamma_R$,
the weight within each subcell decreases by a factor of 2; the net
decrease in weight is thus a factor of $2^{2/d}$.

Hence, $\Wshort$ decreases by a factor of $2^{2/d}$ in either subcell.  (On the other hand,
$\Nlong$ may not necessarily decrease.)
We then have
\begin{equation}\label{eqn:divide}
 T(\Nlong,\Wshort)\:\le\: 2\,T(\Nlong,\Wshort/2^{2/d}) + O(\Nlong + t\Wshort).
\end{equation}

\paragraph{Putting it together.}
By combining (\ref{eqn:divide}) and (\ref{eqn:reduce}) 
and letting $T(N) := T(cN,N)$, we obtain
\begin{eqnarray*}
T(N)\ \le\ 2\,T(cN, N/2^{2/d}) + O(tN) 
&\le& 2\, T(cN/2^{2/d}, N/2^{2/d}+cN/t) + O(tN)\\
&\le & 2\, T\left(\tfrac{1+ 2c/t}{2^{2/d}}N\right) + O(tN).
\end{eqnarray*}
For the base case, we have $T(O(1))=O(t^{d/2})$: when
$\Wshort=O(1)$, there are
$O(1)$ orthants of $\BB$ and $O(t)$ boxes of $\AAA$, and so the problem
can be solved by running a known algorithm for Klee's measure problem
on $O(t)$ boxes~\cite{Chan13}.

By the master theorem,
the solution to the recurrence is
\[ T(N)\ =\ O(t^{d/2} N^{1/\log_2 (2^{2/d}/(1+2c/t))})
\ =\ O(t^{d/2} N^{d/2 + O(1/t)}).
\]
Choosing $t=\log N$ yields $T(N)=O(N^{d/2}\log^{d/2}N)$.
This completes the description and analysis of the main algorithm.
%Thus, Problem~\ref{prob2} can be solved in
%$O(N^{d/2}\log^{d/2}N)$ time.

\paragraph{Shaving logs by bit packing.}
For Problem~\ref{prob2}(a), we can obtain a minor (but not-very-practical) improvement
in the polylogarithmic factors by using more technical but standard
bit-packing tricks, as we now briefly explain (see \cite{Chan13} for more details on
these kinds of tricks):  The main observation is that
actual coordinate values do not matter here,
only their relative order, so we can replace them with their ranks
in the sorted list.
Thus, the $O(\Nlong + t\Wshort)$ input objects can be represented
by $O((\Nlong + t\Wshort)\log(\Nlong + t\Wshort))$ bits and packed in
$O(((\Nlong + t\Wshort)\log(\Nlong + t\Wshort))/w)$ words, assuming
a $w$-bit word RAM model of computation.
The $O(\Nlong + t\Wshort)$ cost for various linear scans during 
recursion can be reduced to $O(((\Nlong + t\Wshort)\log^2(\Nlong + t\Wshort))/w)$, since sorting $k$ $b$-bit numbers
can be done in $O((kb\log k)/w)$ time by a packed version of mergesort.
Thus, the recurrence changes to 
$T(N)\:\le\: 2\, T(\tfrac{1+ 2c/t}{2^{2/d}}N) + O(1 + (tN\log^2 (tN))/w).$

For the base case, we can use a bit-packed version of Chan's algorithm
for the box coverage problem on $O(t)$ boxes, which runs in $T(O(1))=O(1+(t/w)^{d/2}\log^{O(1)}w)$ time~\cite[Section~3.1]{Chan13}.
The recurrence solves to
$T(N)=O(N^{d/2+O(1/t)} \cdot (1+(t/w)^{d/2}\log^{O(1)}w))$.
Choosing $t=\log N$ yields $T(N) = O(N^{d/2}\cdot(1+ ((\log N)/ w)^{d/2}\log^{O(1)}w))$.

The above may require nonstandard operations on $w$-bit words.
By choosing $w=\delta\log N$ for a sufficiently small constant $\delta>0$,
such operations can be simulated in constant time via table lookup
after preprocessing in $2^{O(w)} = N^{O(\delta)}$ time.
Hence, we obtain the final time bound of $O(N^{d/2}(\log\log N)^{O(1)})$.

Problem~\ref{prob2}(b) can be solved similarly, with some modification
to the steps to reduce the number of long boxes in $\AAA$, but this is identical
to the modification of Chan's algorithm for the original
box depth problem~\cite[Section~3.1]{Chan13}.  To summarize, we have obtained the following theorem:

\begin{theorem}
For any constant $d\ge 4$,
Problem~\ref{prob2}(c) can be solved in $O(N^{d/2}\log^{d/2}N)$ time,
and Problems~\ref{prob2}(a,b) can be solved in $O(N^{d/2}(\log\log N)^{O(1)})$ time.
\end{theorem}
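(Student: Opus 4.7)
The plan is to generalize Chan's divide-and-conquer algorithm for Klee's measure problem to the ``colorful'' setting of Problem~\ref{prob2}. A two-stage strategy---first materialize each $S_\chi$, then run Klee---is blocked because a single $S_\chi$ may already have $\Theta(N^{\lfloor d/2\rfloor})$ complexity. Instead, within one recursion over a cell $\gamma$, I would maintain two object types simultaneously: the original colored orthants $\BB$, and an auxiliary box set $\AAA$ that records complements of color classes whose short structure has been exhausted inside $\gamma$.

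The key device is a weighted potential on short $(d-2)$-faces crossing $\gamma$'s interior: a face orthogonal to axes $i,j$ gets weight $2^{(i+j)/d}$ if it comes from $\BB$ and $2^{(i+j)/d}/t$ if it comes from $\AAA$, for a parameter $t$ to be tuned at the end. This weighting is engineered so that each $\BB$-short orthant carries $\Theta(1)$ total weight and each $\AAA$-short box carries $\Theta(1/t)$, and so that the cyclic axis renumbering combined with splitting at the weighted median of first-axis-orthogonal short faces shrinks every face weight by exactly $2^{2/d}$ per subcell, driving $\Wshort$ down at a controlled geometric rate.

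The per-cell procedure I would run is: (1) discard trivial objects, and for each color $\chi$ whose short orthants are all gone, replace the at-most-$2d$ surviving long orthants of $\chi$ by a single new box in $\AAA$ representing $\overline{S_\chi}\cap\gamma$; (2) for each axis~$i$, union the long $\AAA$-boxes projected to axis~$i$ into slabs and collapse those slabs to width~$0$, an operation that preserves $\bigcap_\chi S_\chi\cap\bigcap_{\aaa\in\AAA}\overline{\aaa}\cap\gamma$ and kills all long $\AAA$-boxes; (3) charge each surviving long $\BB$-orthant to a same-colored short orthant, giving $\Nlong=O(\Wshort)$; (4) split $\gamma$ at the weighted median along axis~$1$, renumber axes cyclically, and recurse on both halves. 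This yields the two recurrences (\ref{eqn:reduce}) and (\ref{eqn:divide}); chaining them and applying Chan's Klee-measure algorithm on the $O(t)$ residual boxes at the base case gives $T(N)=O(t^{d/2}N^{d/2+O(1/t)})$, and picking $t=\log N$ proves part~(c).

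The main obstacle is taming the feedback between $\Nlong$ and $\Wshort$: the conversion in step~(1) inflates $\Wshort$ by up to $\Nlong/t$, and if $\AAA$-faces were weighted identically to $\BB$-faces the recurrence would not close; the $1/t$ discount on $\AAA$ is exactly what absorbs that bump. For parts~(a) and~(b), I would replace coordinates by their ranks, bit-pack the $O(\Nlong+t\Wshort)$ objects into $w$-bit words, run a packed mergesort inside each linear scan, and switch to Chan's bit-packed box-coverage/box-depth routine at the base; choosing $w=\delta\log N$ so that word operations can be tabulated in $N^{O(\delta)}$ preprocessing, and again setting $t=\log N$, shaves the polylogarithmic factor down to $(\log\log N)^{O(1)}$ and completes the theorem.
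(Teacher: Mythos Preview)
Your proposal is correct and follows the paper's proof essentially step for step: the same two-object-type recursion over a cell $\gamma$ with auxiliary box set $\AAA$, the same $2^{(i+j)/d}$ weights with the $1/t$ discount on $\AAA$-faces, the same long-object reduction (per-color trimming to $O(1)$ long orthants, conversion of short-free colors into $\AAA$-boxes, slab collapsing for long $\AAA$-boxes, and charging surviving long $\BB$-orthants to same-color short ones), the same weighted-median split with cyclic axis renumbering yielding recurrences~(\ref{eqn:reduce}) and~(\ref{eqn:divide}), and the same choice $t=\log N$ plus bit-packing with $w=\delta\log N$ for parts~(a,b). Your identification of the $\Nlong$--$\Wshort$ feedback as the main obstacle, and of the $1/t$ discount as the fix, matches the paper's own commentary.
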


\begin{corollary}
For any constant $d\ge 4$,
Problem~\ref{prob1} can be solved in $O((mn)^{d/2}(\log\log(mn))^{O(1)})$ time.
Problem~\ref{prob0}  can be solved in $O((mn)^{d/2}\log (mn) (\log\log (mn))^{O(1)})$ time deterministically, or in 
$O((mn)^{d/2}(\log\log(mn))^{O(1)})$ expected time with randomization. 
\end{corollary}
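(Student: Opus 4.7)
The plan is to chain together the reductions already set up in Section~\ref{sec:intro} with the theorem just proved, and then handle the optimization-to-decision step for Problem~\ref{prob0}.

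First I would address Problem~\ref{prob1}. Following the reductions described around Problem~\ref{prob2}, each pair $(p,B)\in P\times\QQ$ produces a colored unit hypercube $B-p$ with color $p$, giving $N=mn$ colored unit hypercubes on which it suffices to test whether there is a colorful common point. Then by the uniform-grid argument, each unit-hypercube subproblem becomes a colored-orthant subproblem inside a grid cell, with total input size $O(N)=O(mn)$. Applying the theorem to Problem~\ref{prob2}(a) over each grid cell and summing gives $O((mn)^{d/2}(\log\log(mn))^{O(1)})$ time, since the time bound is superadditive in $N$.

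For Problem~\ref{prob0}, reduce optimization to decision. The deterministic bound follows by standard binary search over the $O((mn)^2)$ candidate optimal values (for $d\ge 4$, the optimum is determined by the first coordinate of some pair of input points, say, hence lies in an explicitly enumerable set of that size; for general $d$ one can instead use the Frederickson--Johnson technique~\cite{FredericksonJ84}). Either way this adds only an $O(\log(mn))$ factor, yielding $O((mn)^{d/2}\log(mn)(\log\log(mn))^{O(1)})$. The randomized bound follows by invoking the randomized optimization technique of~\cite{Chan99}, which wraps a decision procedure running in time $T(N)$ satisfying mild monotonicity assumptions into an optimization algorithm with the same expected running time up to constant factors, and thus eliminates the extra logarithm.

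There is essentially no obstacle here: the algorithmic content lives entirely in the theorem. The only mild care needed is checking that the candidate universe of optimal values has size polynomial in $mn$ (so binary search costs only one extra log) and that the decision procedure satisfies the formal prerequisites of~\cite{Chan99} for the randomized speedup, both of which are routine in this setting.
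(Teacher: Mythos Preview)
Your proposal is correct and follows essentially the same route as the paper: chain the reductions from Section~\ref{sec:intro} (Problem~\ref{prob1} $\to$ colored unit hypercubes $\to$ Problem~\ref{prob2}(a)) to get the first bound, then use binary search / Frederickson--Johnson for the deterministic optimization, and Chan's randomized technique for the last bound. The one place where the paper is more explicit is the randomized step: rather than treating~\cite{Chan99} as a black-box wrapper around a decision oracle (it is not quite that), the paper routes Problem~\ref{prob0} through the intermediate ``smallest hypercube containing points of all colors'' problem, for which~\cite[Section~4.2]{Chan99} already verifies the structural prerequisites, and whose decision version is exactly Problem~\ref{prob2}(a).
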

\begin{proof}
As mentioned in Section~\ref{sec:intro}, Problem~\ref{prob0} reduces to
Problem~\ref{prob1} by Frederickson and Johnson's technique~\cite{FredericksonJ84} or by ordinary binary search,
with an extra logarithmic factor.

Chan~\cite[Section~4.2]{Chan99} has described how to apply his
randomized optimization technique to reduce
the following problem to its decision problem without losing
a logarithmic factor: 
\begin{quote}
Given $N$ colored points in $\R^d$,
find the smallest hypercube that contains points of all colors.
\end{quote}
As noted in~\cite{Chan99}, Problem~\ref{prob0}
reduces to this problem.  On the other hand, as noted 
in Section~\ref{sec:intro}, the decision version
of this problem (equivalent to finding a point that is inside unit hypercubes of all colors) reduces to Problem~\ref{prob2}(a), which we have just solved.
\end{proof}

\section{Conditional Lower Bound}

\newcommand{\XX}{{\cal Z}}

In this section, we prove a nearly matching conditional lower bound for Problems~\ref{prob0}--\ref{prob1} for combinatorial algorithms
under the Combinatorial $k$-Clique Hypothesis.
We first introduce an intermediate problem which is more convenient to work with.
Roughly speaking, Problem~\ref{prob1} considers the intersection of  translates
of a \emph{single} shape (the shape being a union of unit hypercubes), whereas
the problem below considers the intersection of translates of \emph{multiple} shapes (each shape being
a union of orthants).

\begin{problem}\label{prob3}
Let $\XX$ be a set of shapes, where each shape is
a union of orthants in $\R^d$.  Let $m$
be the total number of orthants over all shapes of $\XX$.
Given a set $\SSS$ of
$n$ objects where each object is a translate of some shape in $\XX$,
decide whether $\bigcap_{S\in\SSS}S=\emptyset$.
%(i.e., $\bigcup_{S\in\SSS}\overline{S}=\R^d$).
\end{problem}

\begin{lemma}\label{lem:prob3}
Problem~\ref{prob3} reduces to Problem~\ref{prob1} on $O(n)$ points
and $O(m)$ unit hypercubes.
\end{lemma}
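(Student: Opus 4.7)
The plan is to place each shape $Z_j\in\XX$ in its own ``slot'' along the first coordinate axis, spaced far enough apart that the first coordinate of each point $p_i$ will unambiguously identify which slot it is supposed to hit. Within one slot, I will use the elementary observation that on the small cube $(-1/2,1/2)^d$, every orthant $O$ whose corner lies in $(-1/2,1/2)^d$ agrees with a unique unit hypercube $B_O$, namely the one obtained by truncating each unbounded side of $O$ at length~$1$. This $B_O$ satisfies $B_O\subseteq O$ and $B_O\cap(-1/2,1/2)^d = O\cap(-1/2,1/2)^d$, and it lets me replace each of the $m$ input orthants by exactly one unit hypercube.

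Concretely, I would first prune $\XX$ to only those shapes that are actually used by some object, and relabel so that $\XX = \{Z_1,\ldots,Z_k\}$ with every $Z_j$ realized as $Z_{j_i}$ for at least one~$i$. By rescaling coordinates, I may assume that every $v_i$ and every orthant corner lies in $(-\epsilon,\epsilon)^d$ for some fixed $\epsilon < 1/6$. Fix a large constant $M$ (say $M=10$) and slot centers $s_j := jMe_1$ for $j=1,\ldots,k$. For each orthant $O$ appearing in $Z_j$, include the translated unit hypercube $B_O + s_j$ in $\QQ$, and for each object $(v_i,j_i)$ include the point $p_i := -v_i + s_{j_i}$ in $P$. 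This produces $|\QQ|=m$ unit hypercubes and $|P|=n$ points.

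The forward direction is routine: any $y^*\in\bigcap_i (Z_{j_i}+v_i)$ may be chosen at a ``corner'' of the intersection, and the rescaling guarantees $y^* - v_i \in (-1/2,1/2)^d$ for all $i$; the orthant/hypercube equivalence then yields $y^* - v_i \in B_O$ for some $O \in Z_{j_i}$, whence $y^* + p_i\in B_O + s_{j_i}\subseteq S^*$, so $y^*\in\bigcap_i(S^*-p_i)$. The backward direction requires more care: assuming $y\in\bigcap_i (S^*-p_i)$, each $y+p_i$ must lie in $B_O + s_{j'(i)}$ for some $j'(i)\in\{1,\ldots,k\}$ and some $O\in Z_{j'(i)}$, and inspecting the first coordinate forces $(y_1-v_{i,1}) + j_iM$ to be within a constant of $j'(i)M$. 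Consistency across all $i$ then demands $j'(i) = j_i + c$ for a single integer $c$ common to all~$i$. Because every $j\in\{1,\ldots,k\}$ is realized as some $j_i$ after preprocessing, any $c\neq 0$ would push $j_i+c$ outside $\{1,\ldots,k\}$ for some~$i$; hence $c=0$, $j'(i)=j_i$, and $y - v_i\in B_O\subseteq O\subseteq Z_{j_i}$, giving $y\in\bigcap_i (Z_{j_i}+v_i)$.

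The main obstacle is exactly this backward ``no shifted matching'' argument: one has to rule out the adversary's option of using the $p_i$'s to hit hypercubes associated with a \emph{wrong} catalog entry. Slot separation via the large~$M$ reduces any such alternative to a uniform integer shift $c$ in the catalog index, and the preliminary pruning/relabeling step then forces $c=0$; these two ingredients together are what make the reduction sound.
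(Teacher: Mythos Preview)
Your reduction is correct and follows essentially the same architecture as the paper's proof: rescale so all corners and translation vectors are tiny, replace each orthant by the unique unit hypercube that agrees with it on a small cube (using $B_O\subseteq O$ globally and $B_O\cap(-1/2,1/2)^d=O\cap(-1/2,1/2)^d$ locally), and separate the shapes into well-spaced slots along the first axis so that each point $p_i=-v_i+s_{j_i}$ can only ``see'' the hypercubes in the correct slot.

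The one genuine difference is how the backward direction rules out a point $y\in\bigcap_i(S^*-p_i)$ from matching wrong slots. You first prune $\XX$ to the shapes actually used, then argue that slot separation forces $j'(i)-j_i$ to equal a single integer $c$ for all $i$, and the pruning makes $c=0$ the only possibility. The paper instead adds two auxiliary ``sentinel'' unit hypercubes $[0,1]^d$ and $u_{\ell+1}+[0,1]^d$ together with two auxiliary points $u_0$ and $u_{\ell+1}$; the intersection $(S^*-u_0)\cap(S^*-u_{\ell+1})$ is then contained in $[0,1]^d$ outright, so one never has to reason about a global index shift. The sentinel trick is slightly slicker (it avoids the preprocessing step and the consistency argument), while your version is more self-contained and makes explicit why a wrong-slot match cannot occur. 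Both yield $O(n)$ points and $O(m)$ hypercubes.
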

\begin{proof}
Assume  (by rescaling) that the coordinates of all orthants of $\XX$ and all translation
vectors used in $\SSS$ are in $[0,1/2]$.  In particular, if $\bigcap_{S\in\SSS}S$
is nonempty, it must contain a point in $[0,1]^d$.
Inside $[0,1]^d$, each orthant of $\XX$ may be replaced by an equivalent unit hypercube.

Let $Z_1,\ldots,Z_\ell$ be the shapes of $\XX$.
Let $\BB_i$ be the unit hypercubes corresponding to the orthants defining
$Z_i$ (so that $(Z_i+t)\cap [0,1]^d = \bigcup_{B\in \BB_i} (B+t) \cap [0,1]^d$ for any $t\in [0,1/2]^d$).

We construct an instance of Problem~\ref{prob1} as follows:
For each $B\in \BB_i$, add the shifted unit hypercube
$B+u_i$ to $\QQ$ where $u_i := (4i,0,\ldots,0)\in\R^d$.
(This operation distributes objects in different classes $\BB_i$ to different parts
of space, since the vectors $u_i$ are at least 4 units apart from each other along the first axis.)
For each object $S\in\SSS$, if $S$ is the translate $Z_i+t$,
add the point $u_i-t$ to $P$.  Lastly, add two auxiliary unit
hypercubes $[0,1]^d$ and
$u_{\ell+1}+[0,1]^d$ to $\QQ$, and two auxiliary points $u_0$ and $u_{\ell+1}$ to $P$.

We solve Problem~\ref{prob1} on these points of $P$ and these unit hypercubes of $\QQ$, to determine whether $\bigcap_{p\in P}(S^*-p)=\emptyset$, where $S^*:=\bigcup_{B\in\QQ}B$.  For correctness, we just observe that $\bigcap_{p\in P}(S^*-p)$ is identical to $\bigcap_{S\in\SSS}S$ inside $[0,1]^d$.
This is because for each $B\in\BB_i$, $(B+u_i)-(u_{i'}-t)$ may intersect $[0,1]^d$ only if $i=i'$
(since $u_i$ and $u_{i'}$ are far apart if $i\neq i'$), assuming $t\in [0,1/2]^d$.
\end{proof}

We now prove hardness of Problem~\ref{prob3} by reduction from the
clique problem for graphs.  We first warm up with two simpler reductions
yielding weaker lower bounds, before presenting the final reduction in 
Lemma~\ref{lem:final}.  
(Readers who do not need intuition building may
go straight to Lemma~\ref{lem:final}'s proof.)

\paragraph{First attempt.}
First observe that the box coverage problem 
(deciding whether $n$ boxes in $\R^d$ cover the entire space, i.e., deciding whether the intersection of the complements of $n$ boxes is empty)
easily reduces to
Problem~\ref{prob3} with $n$ orthants and $n$ objects in $\R^d$,
since the complement of a box is the union of $O(1)$ orthants.
By Lemma~\ref{lem:prob3}, we 
immediately obtain an $\Omega(n^{d/2-\delta})$ conditional lower bound for Problem~\ref{prob3}, since 
the box coverage problem has an $\Omega(n^{d/2-\delta})$ lower bound under the Combinatorial $k$-Clique Hypothesis~\cite{Chan10}.

In the following lemma, we directly modify the (very simple) known reduction from clique to the box coverage
problem~\cite{Chan10}, to show the same lower bound even when the number of orthants $m$ is $O(1)$:

\begin{lemma}
Detecting a $d$-clique in a graph with $\nG$ vertices
reduces to Problem~\ref{prob3} with $m=O(1)$ orthants and
$n=O(\nG^2)$ objects in $\R^d$.
\end{lemma}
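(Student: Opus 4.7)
The plan is to directly mimic the standard reduction from $d$-clique detection to the box coverage problem used in~\cite{Chan10}, and then observe that the boxes appearing in that reduction are translates of only $O(1)$ distinct prototypes, so they can be packaged as objects in Problem~\ref{prob3} using only $O(1)$ orthants in total.

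Label the vertices as $V=\{1,\ldots,\nG\}$ and identify a candidate $d$-clique with an integer tuple $(v_1,\ldots,v_d)\in V^d$. For each ordered axis pair $(i,j)$ with $i\ne j$, introduce a prototype shape $Z_{i,j}\in\XX$ defined as the complement in $\R^d$ of the ``unit slab'' $\{x\in\R^d:|x_i|\le 1/2,\,|x_j|\le 1/2\}$; this complement is a union of four half-spaces, and each such half-space is a union of $2^{d-1}$ orthants, so $Z_{i,j}$ consists of $O(1)$ orthants. For each axis $i$, introduce two additional bounding shapes $\{x_i\ge 0\}$ and $\{x_i\le 0\}$, each itself a union of $O(1)$ orthants. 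Build $\SSS$ as follows: for every pair $(u,v)\in V\times V$ that is either a non-edge of $G$ or a self-loop $(u=v)$, and every ordered axis pair $(i,j)$, include the translate of $Z_{i,j}$ that shifts the slab center to $(u,v)$ along axes $(i,j)$; and for every axis $i$, include the two bounding objects $\{x_i\ge 1/2\}$ and $\{x_i\le \nG+1/2\}$. Then $|\XX|=d(d-1)+2d=O(1)$, so the total orthant count is $m=O(1)$, while $|\SSS|\le\nG^2\cdot d(d-1)+2d=O(\nG^2)$, as required.

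For correctness I would verify both implications. If $G$ contains a $d$-clique $(v_1,\ldots,v_d)$, the integer point $p=(v_1,\ldots,v_d)$ lies in every object: it meets the bounding constraints since each $v_i\in\{1,\ldots,\nG\}\subseteq[1/2,\nG+1/2]$, and it cannot lie inside the slab centered at $(u,v)$ in axes $(i,j)$ because that would force the integer equality $(v_i,v_j)=(u,v)$, contradicting either that $(u,v)\notin E$ while $(v_i,v_j)\in E$, or that $u=v$ while $v_i\ne v_j$. Conversely, if some $p\in\bigcap_{S\in\SSS}S$ exists, the bounding objects pin $p$ inside $[1/2,\nG+1/2]^d$, so rounding each coordinate to the nearest integer yields $v_i\in\{1,\ldots,\nG\}$ with $|p_i-v_i|\le 1/2$; if any pair $(v_i,v_j)$ were a non-edge or a self-loop, then $p$ would sit in the corresponding slab and violate its object, contradiction, so $(v_1,\ldots,v_d)$ is a $d$-clique.

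The one place where the argument needs care is boundary handling: if some $p_i$ equals a half-integer, the rounding is ambiguous and $p$ lies on the common boundary of slabs centered at two consecutive integers. I would resolve this in a standard way, either by fixing a half-open convention for the slab boundaries consistent with how orthants are interpreted in Problem~\ref{prob2}, or, more robustly, by perturbing every slab center by a fixed tiny irrational offset so that no point of $\bigcap_{S\in\SSS}S$ can ever land on such a boundary. Either fix leaves the counts $m=O(1)$ and $n=O(\nG^2)$ unchanged, so the boundary issue is a bookkeeping nuisance rather than a genuine obstacle.
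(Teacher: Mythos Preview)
Your proposal is correct and follows essentially the same construction as the paper: for each non-edge (including self-loops) $(u,v)$ and each ordered axis pair, add the complement of a unit box at position $(u,v)$ in those two coordinates, together with $O(1)$ bounding halfspaces, so that the intersection is nonempty iff a $d$-clique exists. The only differences are cosmetic conventions---the paper uses half-open unit boxes $\{\lfloor x_\alpha\rfloor=u,\ \lfloor x_\beta\rfloor=v\}$ and the floor map instead of your centered closed slabs and nearest-integer rounding, which sidesteps the boundary ambiguity you flag, but your proposed fixes for that are standard and adequate.
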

\begin{proof}
Let $G=(V,E)$ be the given graph, with $V=[\nG]=\{0,\ldots,\nG-1\}$.
We will construct a set $\SSS$ of objects whose intersection is
\begin{equation}\label{eqn0}
\{(x_1,\ldots,x_{d})\in [0,\nG)^d:\ \{\down{x_1},\ldots,\down{x_d}\}\ \mbox{is
a $d$-clique of $G$}\}.
\end{equation}
It would then follow that the intersection is nonempty iff
a $d$-clique exists.

The construction is very simple: for each $\alpha,\beta\in \{1,\ldots,d\}$
with $\alpha\neq\beta$
and for each $u,v\in [\nG]$ with $uv\not\in E$, % or $u=v$,
add the complement of the box
\[ B_{\alpha,\beta,u,v}\::=\: \{(x_1,\ldots,x_d):\ \down{x_\alpha}=u,\ \down{x_\beta}=v\} 
\]
to $\SSS$.  (Note that if $u=v$, we consider $uv\not\in E$.)
%Clearly, (\ref{eqn0}) is satisfied.
These boxes are unit squares when projected to the $\alpha$-th
and $\beta$-th axes, and are thus translates of $O(1)$ fixed boxes, and the
complement of each such fixed box can obviously be expressed as a union of $O(1)$ orthants and can be added to $\XX$.
Lastly, add the $O(1)$ halfspaces bounding $[0,\nG)^d$ to $\SSS$.
Then $\SSS$ has a total of $O(\nG^2)$ translates and clearly satisfies the desired property (\ref{eqn0}).
\end{proof}

In combination with Lemma~\ref{lem:prob3},
the above lemma indeed implies an $\Omega(n^{d/2-\delta})$ conditional lower bound for Problem~\ref{prob1}:
if Problem~\ref{prob1} for $m=O(1)$ has an $O(n^{d/2-\delta})$-time combinatorial
algorithm, then the $d$-clique detection problem for a graph with $\nG$ vertices
has a combinatorial algorithm with running time $O((\nG^2)^{d/2-\delta})$
= $O(\nG^{d-2\delta})$, contradicting the Combinatorial $k$-Clique Hypothesis.

\paragraph{Second attempt.}
We now improve the lower bound by reducing from clique of a large size $2d$.
We use  the following key idea: encode a pair of vertices in a single coordinate value.

\begin{figure}
\centering
\includegraphics[scale=1]{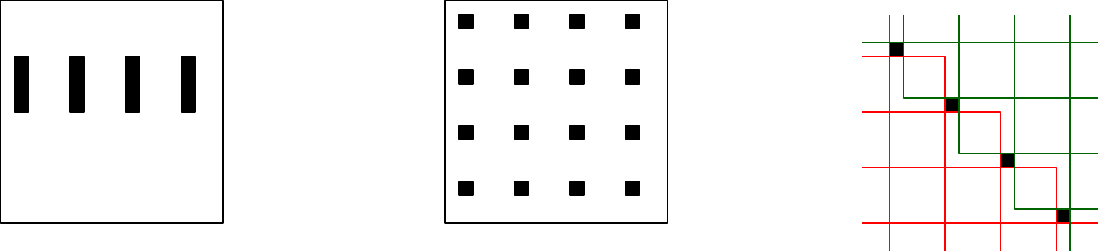}
\caption{(left) A region $Y_{\alpha,\beta,a,b,u,v}$ for $(a,b)=(0,1)$.
(middle) A region $Y_{\alpha,\beta,a,b,u,v}$ for $(a,b)=(0,0)$.
(right) A diagonal $D_{\alpha,\beta}$ (whose complement can be expressed as a union of the red and the green orthants).}\label{fig:grid}
\end{figure}

\begin{lemma}
Detecting a $(2d)$-clique in a graph $G$ with $\nG$ vertices
reduces to Problem~\ref{prob3} with $m=O(\nG)$ orthants and
$n=O(\nG^3)$ objects in $\R^d$.
\end{lemma}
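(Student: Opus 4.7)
The plan is to encode $2d$ candidate clique vertices into the $d$ coordinates of $\R^d$ by assigning a \emph{pair} of vertices to each axis. For each $\alpha\in\{1,\ldots,d\}$, I let $x_\alpha\in[0,\nG^2)$ represent the pair $(a_\alpha,b_\alpha)\in[\nG]^2$ via $a_\alpha=\down{x_\alpha/\nG}$ and $b_\alpha=\down{x_\alpha}\bmod\nG$, and I build $\SSS$ so that $\bigcap_{S\in\SSS}S$, restricted to $[0,\nG^2)^d$, is exactly the set of points whose $2d$-tuple $(a_1,b_1,\ldots,a_d,b_d)$ consists of $2d$ distinct vertices forming a $(2d)$-clique in $G$. (I enforce distinctness by treating every $(u,u)$ as a non-edge to be forbidden.)

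First I would add $O(1)$ halfspace objects restricting to $[0,\nG^2)^d$, and then, for each non-edge $uv$ of $G$ (including $u=v$), the following three straightforward families of constraints, each realized as a new object equal to the complement of the corresponding forbidden region: (i) a within-axis constraint forbidding $(a_\alpha,b_\alpha)=(u,v)$ for each $\alpha$ (a single $\R^d$-box, complement is $O(1)$ orthants); (ii) a slot-$0$/slot-$0$ cross-axis constraint forbidding $a_\alpha=u\wedge a_\beta=v$ for each axis pair (again a single box); and (iii) slot-$0$/slot-$1$ (and symmetric) cross-axis constraints forbidding $a_\alpha=u\wedge b_\beta=v$: since $b_\beta=v$ is a union of $\nG$ unit intervals along axis $\beta$, this forbidden region is a union of $\nG$ unit boxes, added as $\nG$ separate box-complement objects. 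The total object count from these three families is $O(\nG^3)$, built from $O(1)$ distinct box-complement shapes each using $O(1)$ orthants.

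The key difficulty is the slot-$1$/slot-$1$ cross-axis constraint: $b_\alpha=u\wedge b_\beta=v$ is a union of $\nG^2$ unit boxes in the $(\alpha,\beta)$-plane, so a naive enumeration would require $\Omega(\nG^4)$ objects. I would instead decompose these $\nG^2$ boxes into $O(\nG)$ \emph{diagonal chains}: for each offset $\delta$ with $|\delta|\le\nG-1$, chain $\delta$ consists of the boxes at positions $(k\nG+u,\,(k+\delta)\nG+v)$ where both $k$ and $k+\delta$ lie in $[\nG]$. For each axis pair $(\alpha,\beta)$ I would introduce a single template shape $D_{\alpha,\beta}$ into $\XX$, equal to the complement (in $\R^d$) of the canonical chain $\bigcup_{k\in[\nG]}[k\nG,k\nG+1)^2$ extended by $\R^{d-2}$ in the other coordinates. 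A direct geometric decomposition shows that $D_{\alpha,\beta}$ is a union of $O(\nG)$ orthants: besides $O(1)$ outer halfspaces bounding $[0,\nG^2)^2$, for each $i=0,\ldots,\nG-1$ the two ``corner'' orthants $\{x_\alpha\ge i\nG+1,\,x_\beta<(i+1)\nG\}$ and $\{x_\alpha<(i+1)\nG,\,x_\beta\ge i\nG+1\}$ each miss every canonical box, and together they cover all of $\R^d$ outside the chain. Translating $D_{\alpha,\beta}$ by $(u,\,\delta\nG+v)$ in the $(\alpha,\beta)$-plane produces the object that forbids chain $\delta$ for the non-edge $(u,v)$, so the slot-$1$/slot-$1$ constraints contribute $O(\nG^3)$ more objects to $\SSS$.

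Summing up, we obtain $n=O(\nG^3)$ objects in $\SSS$ and $m=O(\nG)$ orthants across the $O(d^2)=O(1)$ template shapes of $\XX$ (dominated by the $D_{\alpha,\beta}$ shapes). Correctness is then immediate: a point in $[0,\nG^2)^d$ lies in every object iff its encoded $2d$-tuple avoids every forbidden region iff that tuple is a $(2d)$-clique. The main obstacle to fully fill in is verifying the $O(\nG)$-orthant decomposition of the staircase-complement shape $D_{\alpha,\beta}$ and checking that its various translates capture precisely the slot-$1$/slot-$1$ forbidden chains without introducing spurious constraints.
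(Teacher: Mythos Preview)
Your proposal is correct and follows essentially the same approach as the paper: encode two vertices per coordinate, add complements of the forbidden regions $Y_{\alpha,\beta,a,b,u,v}$ as objects, and handle the hard low-digit/low-digit case by decomposing the $\nG\times\nG$ grid of unit squares into $O(\nG)$ diagonals, each of whose complement is a staircase expressible as a union of $O(\nG)$ orthants. The only cosmetic differences are that the paper uses anti-diagonals ($i+j=\text{const}$) while you use main diagonals ($i-j=\delta$), and the paper's $\phi_0,\phi_1$ correspond to your slot-$1$, slot-$0$ respectively; your explicit orthant decomposition of $D_{\alpha,\beta}$ is a valid fleshing-out of what the paper leaves as ``the region sandwiched between two staircases.''
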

\begin{proof}
Let $G=(V,E)$ be the given graph, with $V=[\nG]$.
%The key idea is to encode a pair of vertices in a single coordinate value.
For any $x\in [0,\nG^2)$, 
let $\phi_0(x) = \down{x}\bmod{\nG}$ and $\phi_1(x) = \down{x/\nG}$.
We will construct a set $\SSS$ of objects whose intersection is
\begin{equation}\label{eqn1}
 \{(x_1,\ldots,x_{d})\in [0,\nG^2)^d:\ \{\phi_0(x_1),\phi_1(x_1),\ldots,
\phi_0(x_d),\phi_1(x_d)\}\ \mbox{is
a $(2d)$-clique in $G$}\}.
\end{equation}
It would then follow that the intersection is nonempty iff
a $(2d)$-clique exists.

For each $\alpha,\beta\in \{1,\ldots,d\}$ and $a,b\in\{0,1\}$
with $(\alpha,a)\neq (\beta,b)$, and for each $u,v\in [\nG]$
with $uv\not\in E$, % or $u=v$,
define the region
\[
Y_{\alpha,\beta,a,b,u,v} \::=\:  
 \{(x_1,\ldots,x_{d})\in [0,\nG^2)^d:\  \phi_a(x_\alpha)=u,\ \phi_b(x_\beta)=v\}.
\]

If $\alpha=\beta$, then $Y_{\alpha,\beta,a,b,u,v}$ is just a unit interval
when projected to the $\alpha$-th axis, and is thus a translate of one
of $O(1)$ fixed boxes,  
and the complement of each such fixed box can be expressed as a union of $O(1)$ orthants and can
be added to $\XX$.  From now on, assume $\alpha\neq\beta$.

If $(a,b)=(1,1)$, then $Y_{\alpha,\beta,a,b,u,v}$ is just an $\nG\times \nG$ square
when projected to the $\alpha$-th and $\beta$-th axes,
and is thus a translate of one
of $O(1)$ fixed boxes,
and the complement of each such fixed box can be expressed as a union of $O(1)$
orthants and can be added to $\XX$.

If $(a,b)=(0,1)$ (or $(a,b)=(1,0)$), then $Y_{\alpha,\beta,a,b,u,v}$ is a union of
$\nG$ rectangles of dimension $1\times \nG$ (or $\nG\times 1$) when projected to the $\alpha$-th and $\beta$-th axes (see Figure~\ref{fig:grid}(left)), and is thus a union of $\nG$ translates of
one of $O(1)$ fixed boxes,
and the complement of each fixed box can be expressed as a union of $O(1)$
orthants and can be added to $\XX$.  
%(Note that if $\alpha=\beta$, then $Y_{\alpha,\beta,a,b,u,v}$ is a disjoint union of $O(n)$ slabs of height~1, and a similar conclusion holds.)

If $(a,b)=(0,0)$, then $Y_{\alpha,\beta,a,b,u,v}$ forms a $\nG\times\nG$ grid pattern
when projected to the $\alpha$-th and $\beta$-th axes  (see Figure~\ref{fig:grid}(middle)).
Although the complement of this region
can't be expressed as a union of orthants, we can decompose the grid into subregions that can.  The most obvious approach is to decompose into rows or columns, but this
still doesn't work.  Instead, we will decompose into ``diagonals''.
More precisely, define
\begin{eqnarray*} D_{\alpha,\beta}\ :=\ \{(x_1,\ldots,x_d):\!\!&& \down{x_\alpha}\bmod{\nG}=0,\ 
\down{x_\beta}\bmod{\nG}=0,\\ && \down{x_\alpha/\nG}+\down{x_\beta/\nG}=\nG,\ 
x_\alpha,x_\beta\ge 0\}.
\end{eqnarray*}
Since $D_{\alpha,\beta,a,b}$ can be viewed as the region sandwiched between
two staircases when projected to 2D,
its complement $\overline{D_{\alpha,\beta,a,b}}$ can be expressed as a union of
$O(\nG)$ orthants  (see Figure~\ref{fig:grid}(right)).  Add the shape $\overline{D_{\alpha,\beta,a,b}}$ to $\XX$.
The region $Y_{\alpha,\beta,a,b,u,v}$ can be expressed as a union of
$O(\nG)$ translates of $D_{\alpha,\beta}$ when clipped to $[0,\nG^2)^d$.

In any case,
$\overline{Y_{\alpha,\beta,a,b,u,v}}\cap[0,\nG^2)^d$ can be expressed as the intersection
of $[0,\nG^2)^d$ with $O(\nG)$ translates of shapes from $\XX$.
Add these translates to $\SSS$,
for each $\alpha,\beta\in \{1,\ldots,d\}$ and $a,b\in [g]$ with $(\alpha,a)\neq (\beta,b)$, and for each $u,v\in [\nG]$ with
$uv\not\in E$. % or $u=v$.
Lastly, add the $O(1)$ halfspaces bounding $[0,\nG^2)^d$ to $\SSS$.
Then $\SSS$ has a total of $O(\nG^3)$ translates and satisfies the desired property~(\ref{eqn1}).
\end{proof}

The above lemma implies a larger $\Omega(n^{2d/3-\delta})$ conditional lower bound:
if Problem~\ref{prob1} for $m=n^{1/3}$ has an $O(n^{2d/3-\delta})$-time combinatorial
algorithm, then the $(2d)$-clique detection problem for a graph with $\nG$ vertices
has a combinatorial algorithm with running time $O((\nG^3)^{2d/3-\delta})$
= $O(\nG^{2d-3\delta})$, contradicting the Combinatorial $k$-Clique Hypothesis.

\paragraph{Final reduction.}
We obtain our final lower bound by generalizing the idea further.  We reduce
from clique of still larger size and now encode $g$-tuples of vertices instead of pairs
(incidentally, the idea of encoding tuples has also
appeared recently in K\" unnemann's conditional lower bound proofs for Klee's
measure problem~\cite{Kun22}):

\begin{lemma}\label{lem:final}
Let $g$ be any integer constant.
Detecting a $(dg)$-clique in a graph $G$ with $\nG$ vertices
reduces to Problem~\ref{prob3} with $m=O(\nG^{g-1})$ orthants and
$n=O(\nG^{g+1})$ objects in $\R^d$.

More generally, for any given $m\le \nG^{g-1}$,
detecting a $(dg)$-clique in a graph $G$ with $\nG$ vertices
reduces to Problem~\ref{prob3} with $m$ orthants and
$n=O(\nG^{2g}/m)$ objects in $\R^d$.
\end{lemma}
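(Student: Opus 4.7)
The plan is to generalize the second attempt by encoding $g$-tuples of vertices per coordinate instead of pairs. Work in $[0,\nG^g)^d$ and introduce projection functions $\phi_c(x) := \lfloor x/\nG^c\rfloor \bmod \nG$ for $c \in \{0,\ldots,g-1\}$, so that each coordinate $x_\alpha$ encodes a $g$-tuple $(\phi_0(x_\alpha),\ldots,\phi_{g-1}(x_\alpha)) \in [\nG]^g$ and the $d$ coordinates together encode a $(dg)$-tuple. The goal is to build $\SSS$ so that $\bigcap_{S\in\SSS} S$ equals the set of $x \in [0,\nG^g)^d$ whose encoded $(dg)$ vertices form a clique in $G$. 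As in the second attempt, this amounts to forbidding, for every $(\alpha,a)\ne(\beta,b)$ and every non-edge $uv\notin E$, the region $Y_{\alpha,\beta,a,b,u,v} := \{x : \phi_a(x_\alpha)=u,\ \phi_b(x_\beta)=v\}$.

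Projected to the $(\alpha,\beta)$-plane, $Y_{\alpha,\beta,a,b,u,v}$ is a grid of $\nG^{g-a-1}\times\nG^{g-b-1}$ cells of size $\nG^a\times\nG^b$, whose complement is not directly a union of few orthants. The plan is to generalize the staircase shape of the second attempt into a parametric template $D_{\alpha,\beta,a,b,w}$ consisting of $w$ cells of size $\nG^a\times\nG^b$ arranged along an anti-diagonal. Scaling the construction of Figure~\ref{fig:grid}(right) up from unit cells to $\nG^a\times\nG^b$ cells, the complement $\overline{D_{\alpha,\beta,a,b,w}}$ decomposes into an ``upper-left'' and a ``lower-right'' staircase region and can be written as a union of $O(w)$ orthants. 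Place all these templates (one per tuple $(\alpha,\beta,a,b)$) into $\XX$; since there are only $O(d^2g^2)=O(1)$ such tuples, the total orthant count is $m = O(w)$.

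For each forbidden region, partition its cell grid into anti-diagonals of length at most $w$. Each anti-diagonal (possibly clipped at the boundary of $[0,\nG^g)^d$) is a translate of $D_{\alpha,\beta,a,b,w}$, and the number of anti-diagonals required is $O\!\left(\nG^{2g-a-b-2}/w + \nG^{g-a-1} + \nG^{g-b-1}\right)$. Adding these translates to $\SSS$ enforces $\overline{Y_{\alpha,\beta,a,b,u,v}}$ by De~Morgan. Summing over the $O(\nG^2)$ non-edges and the $O(1)$ tuples $(\alpha,\beta,a,b)$, the dominant contribution comes from $(a,b)=(0,0)$ (finest cells and largest grid) and is $O(\nG^{2g}/w)$, yielding $n = O(\nG^{2g}/w)$ provided $w \le \nG^{g-1}$. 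Setting $w = m$ gives the tradeoff $n = O(\nG^{2g}/m)$, and choosing $m = \nG^{g-1}$ specialises this to $n = O(\nG^{g+1})$, matching the first bullet of the lemma.

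The main obstacle will be verifying that $\overline{D_{\alpha,\beta,a,b,w}}$ is indeed expressible as $O(w)$ orthants uniformly in $(a,b,w)$. This requires analysing the Boolean structure of the staircase at two scales: the within-cell $\phi_a$-style constraints that carve out cells of size $\nG^a\times\nG^b$ on the fine grid, and the anti-diagonal selection on the coarse grid of period $\nG^{a+1}\times\nG^{b+1}$. Each part should contribute $O(w)$ orthants, analogous to the two-staircase decomposition of Figure~\ref{fig:grid}(right). The remaining details---the degenerate case $\alpha=\beta$ (which is effectively a 1D constraint), bounding $[0,\nG^g)^d$ via $O(1)$ halfspace objects in $\SSS$, and checking that $\bigcap_{S\in\SSS} S$ restricted to this box coincides with the clique locus---are routine generalisations of the second-attempt argument.
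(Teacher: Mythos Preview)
Your proposal is correct and follows essentially the same approach as the paper: encode $g$ vertices per coordinate via base-$\nG$ digits, forbid each non-edge region $Y_{\alpha,\beta,a,b,u,v}$ by covering its 2D grid pattern with translates of a fixed anti-diagonal template $D_{\alpha,\beta,a,b}$ of length $m$ whose complement is a pair of staircases and hence a union of $O(m)$ orthants, and count $O(\nG^{2(g-1)}/m)$ translates per non-edge. Your remark about a ``two-scale'' analysis of $\overline{D}$ is overcomplicating things---the template is simply $w$ boxes of size $\nG^a\times\nG^b$ on a single staircase, so the decomposition into $O(w)$ orthants is exactly the one-scale picture of Figure~\ref{fig:grid}(right)---but this does not affect correctness.
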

\begin{proof}
Let $G=(V,E)$ be the given graph, with $V=[\nG]$.
%The key idea is to encode a $g$-tuple of vertices in a single coordinate value.
For any $x\in [0,\nG^g)$ and $a\in [g]$, 
let $\phi_a(x)$ be the $(a+1)$-th least significant digit of $\down{x}$ in
base $n$.
%; in other words, if $x\in [i\nG^{a+1}+v\nG^a, i\nG^{a+1}+(v+1)\nG^a)$
%for some $i\in [\nG^{g-a-1}]$ and $v\in [\nG]$, then $\phi_a(x)=v$.
We will construct a set $\SSS$ of objects whose intersection is
\begin{eqnarray}
\{(x_1,\ldots,x_{d})\in [0,\nG^g)^d:\!\! && \{\phi_0(x_1),\ldots,\phi_{g-1}(x_1),\ldots,\phi_0(x_{d}),\ldots,\phi_{g-1}(x_{d})\}\nonumber\\ && \mbox{is a $(dg)$-clique in $G$}\}.\label{eqn3}
\end{eqnarray}
It would then follow that the intersection is nonempty iff
a $(dg)$-clique exists.

For each $\alpha,\beta\in \{1,\ldots,d\}$ and $a,b\in [g]$ with $(\alpha,a)\neq(\beta,b)$, and for each $u,v\in [\nG]$ with $uv\not\in E$, % or $u=v$,
define the region
\begin{eqnarray*}
Y_{\alpha,\beta,a,b,u,v} &:=&  
 \{(x_1,\ldots,x_{d})\in [0,\nG^d)^d:\  \phi_a(x_\alpha)=u,\ \phi_b(x_\beta)=v\}\\[1ex]
&=& \{(x_1,\ldots,x_{d}):\ x_\alpha\in [i\nG^{a+1}+u\nG^a,i\nG^{a+1}+(u+1)\nG^a),\\[-2pt]
&&\qquad\qquad\qquad\ \ \ x_\beta\in [j\nG^{b+1}+v\nG^b,jn^{b+1}+(v+1)\nG^b)\\[-2pt]
&&\qquad\qquad\qquad\ \ \ \mbox{for some $i\in [\nG^{g-a-1}]$, $j\in [\nG^{g-b-1}]$}\}.
\end{eqnarray*}

If $\alpha=\beta$, then $Y_{\alpha,\beta,a,b,u,v}$ is a union of
at most $O(\nG^{g-1})$ unit intervals when projected to the $\alpha$-th axis,
and is thus a union of $O(\nG^{g-1})$ translates of $O(1)$ fixed
boxes, and the complement of each fixed box can be expressed
as a union of $O(1)$ orthants and can be added to $\XX$.  

If $\alpha\neq\beta$, then $Y_{\alpha,\beta,a,b,u,v}$ forms a $O(\nG^{g-1})\times O(\nG^{g-1})$ grid pattern
when projected to the $\alpha$-th and $\beta$-th axes.
Define the ``diagonal''
\[
\begin{array}{ll}
D_{\alpha,\beta,a,b}\ =\ \{(x_1,\ldots,x_{d}): & x_\alpha\in [i\nG^{a+1},i\nG^{a+1}+\nG^a),\ x_\beta\in [j\nG^{b+1},j\nG^{b+1}+\nG^b)\\
& \mbox{for some $i,j\in [m]$ with $i+j=m$}\}.
\end{array}
\]
Since $D_{\alpha,\beta,a,b}$ can be viewed as the region sandwiched between
two staircases when projected to 2D, %(assuming $\alpha\neq\beta$),
its complement $\overline{D_{\alpha,\beta,a,b}}$ can be expressed as a union of
$O(m)$ orthants.  Add the shape $\overline{D_{\alpha,\beta,a,b}}$ to $\XX$.
The region $Y_{\alpha,\beta,a,b,u,v}$ can be expressed as a union of
$O(\nG^{2(g-1)}/m)$ translates of $D_{\alpha,\beta,a,b}$ when clipped to $[0,\nG^g)^d$.

In any case, $\overline{Y_{\alpha,\beta,a,b,u,v}}\cap [0,\nG^g)^d$ can be expressed
as an intersection of $[0,\nG^g)^d$ with $O(\nG^{2(g-1)}/m)$ translates of shapes from $\XX$.
Add all these translates to $\SSS$, for each $\alpha,\beta\in \{1,\ldots,d\}$ and $a,b\in [g]$ with $(\alpha,a)\neq (\beta,b)$, and for each $u,v\in [\nG]$ with
$uv\not\in E$.  % or $u=v$.
Lastly, add the $O(1)$ halfspaces bounding $[0,\nG^g)^d$ to $\SSS$.
Then $\SSS$ has a total 
of $O(\nG^2\cdot \nG^{2(g-1)}/m)=O(\nG^{2g}/m)$ translates 
and satisfies the desired property (\ref{eqn3}).
\end{proof}

The above lemma implies an $\Omega(n^{gd/(g+1)-\delta})$ conditional lower bound
for any integer constant $g$:
if Problem~\ref{prob1} for $m=n^{(g-1)/(g+1)}$ has an $O(n^{gd/(g+1)-\delta})$-time combinatorial
algorithm, then the $(dg)$-clique detection problem for a graph with $\nG$ vertices
has a combinatorial algorithm with running time $O((\nG^{g+1})^{gd/(g+1)-\delta})$
= $O(\nG^{dg-(g+1)\delta})$, contradicting the Combinatorial $k$-Clique Hypothesis.
The exponent $gd/(g+1)-\delta$ exceeds $d-2\delta$, by picking a sufficiently large $g\ge d/\delta$. %and setting $\delta'=2\delta$.

More generally, for any constant $\gamma\le (g-1)/(g+1)$, if Problem~\ref{prob1} for $m=n^\gamma$ has an $O((mn)^{d/2-\delta})$-time combinatorial
algorithm, then the $(dg)$-clique detection problem for a graph with $\nG$ vertices
has a combinatorial algorithm with running time $O(((\nG^{2g}/m)\cdot m)^{d/2-\delta})$
= $O(\nG^{dg-2g\delta})$, contradicting the Combinatorial $k$-Clique Hypothesis.

\begin{theorem}
Under the Combinatorial $k$-Clique Hypothesis,
Problem~\ref{prob0} or \ref{prob1} for $n$ points and $n$ unit hypercubes in $\R^d$ 
does not have an $O(n^{d-\delta})$-time combinatorial algorithm for any constant $\delta>0$.

More generally, under the same hypothesis, for any fixed constant $\gamma\le 1$, Problem~\ref{prob0} or~\ref{prob1} for $m=n^\gamma$ points and $n$ unit hypercubes in $\R^d$ 
does not have an $O((mn)^{d/2-\delta})$-time combinatorial algorithm for any constant $\delta>0$.
\end{theorem}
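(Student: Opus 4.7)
The plan is to stitch together the reductions already proven in this section---Lemma~\ref{lem:final} (from clique to Problem~\ref{prob3}), Lemma~\ref{lem:prob3} (from Problem~\ref{prob3} to Problem~\ref{prob1}), and the trivial reduction from Problem~\ref{prob1} to Problem~\ref{prob0} (since any algorithm computing the optimal translational Hausdorff distance immediately answers the threshold-$1/2$ decision problem). Each lower bound in the theorem then follows by contradiction: a sufficiently fast combinatorial algorithm for Problem~\ref{prob0} or~\ref{prob1} would, through this chain, yield an algorithm for $(dg)$-clique detection breaking the Combinatorial $k$-Clique Hypothesis at $k = dg$ for an appropriately chosen $g$.

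For the first claim, I would suppose that Problem~\ref{prob1} admits an $O(n^{d-\delta})$-time combinatorial algorithm on inputs with $n$ points and $n$ hypercubes, and fix an integer $g$ with $(g+1)\delta > d$. Applying Lemma~\ref{lem:final} with parameter $\Theta(\nG^{g-1})$ followed by Lemma~\ref{lem:prob3} reduces $(dg)$-clique detection on $\nG$-vertex graphs to an instance of Problem~\ref{prob1} with $O(\nG^{g+1})$ points and $O(\nG^{g-1})$ unit hypercubes. To match the algorithm's input shape, I would pad the hypercube set by duplicating existing hypercubes up to the common count $\nG^{g+1}$; this leaves $S^*$, and hence the decision answer, unchanged. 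The hypothetical algorithm then runs in $O(\nG^{(g+1)(d-\delta)})$ time, and the choice of $g$ guarantees $(g+1)(d-\delta) < dg$, contradicting the $(dg)$-clique hypothesis.

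For the more general claim, suppose Problem~\ref{prob1} admits an $O((mn)^{d/2-\delta})$-time combinatorial algorithm on inputs with $m = n^\gamma$ points and $n$ hypercubes. I would pick $g$ large enough that $\gamma \le (g-1)/(g+1)$ and apply Lemma~\ref{lem:final} with the specific parameter $m' = \nG^{2g\gamma/(1+\gamma)}$; elementary algebra shows that this choice satisfies the lemma's constraint $m' \le \nG^{g-1}$ precisely because of the bound on $\gamma$, and the resulting Problem~\ref{prob1} instance has point and hypercube counts whose product is $\Theta(\nG^{2g})$ and whose ratio matches the prescribed $m = n^\gamma$ shape. The algorithm thus runs in $O(\nG^{(2g)(d/2-\delta)}) = O(\nG^{dg - 2g\delta})$ time, again contradicting the $(dg)$-clique hypothesis.

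The main obstacle here is parameter bookkeeping rather than any new mathematical idea: I must check that the constraint $m' \le \nG^{g-1}$ from Lemma~\ref{lem:final} is compatible with the target sizes dictated by $\gamma$, that duplicating hypercubes indeed preserves the decision answer, and that $g$ can always be chosen large enough relative to $\delta$ to keep the contradiction gap strictly positive. All of these reduce to elementary checks, so the substantive work is done entirely by Lemma~\ref{lem:final} and Lemma~\ref{lem:prob3}.
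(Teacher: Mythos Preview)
Your proposal is correct and follows essentially the same route as the paper: chain Lemma~\ref{lem:final} with Lemma~\ref{lem:prob3}, choose $g$ large enough (the paper takes $g\ge d/\delta$ for the first part and $g$ with $\gamma\le (g-1)/(g+1)$ for the second), and derive a contradiction with the $(dg)$-clique hypothesis from the identity $mn=\Theta(\nG^{2g})$. You are simply more explicit than the paper about the padding step and about solving for the Lemma~\ref{lem:final} parameter so that the resulting instance has the prescribed $m=n^\gamma$ shape; the paper leaves these bookkeeping points implicit.
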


\section{Final Remarks}

To summarize, we have studied  the $L_\infty$ translational Hausdorff distance 
problem for point sets, a fundamental problem with a long history in computational geometry.
We have obtained a substantially improved upper bound for this problem, and the first
conditional lower bound in dimension 3 and higher, which nearly match the upper bound.
Our technique for the upper bound is interesting, in that it implies a natural colored generalization
of Klee's measure problem 
can be solved in roughly the same time bound as the original Klee's problem.
Our lower bound proof is interesting, in that it adds to a growing body of recent work on
fine-grained complexity in computational geometry, and more specifically illustrates the power of the Combinatorial Clique Hypothesis.

Our near-$O((mn)^{d/2})$ upper bound also applies to the variant of the problem for \emph{undirected} Hausdorff
distance, since the undirected version of Problem~\ref{prob0}
can also be reduced to Problem~\ref{prob2}(a) with $N=O(mn)$.
%there is a simple reduction of the directed problem on point sets of size $m$ and $n$ to the undirected problem on point sets of size $m+n$.
However, more effort might be needed to adapt our lower bounds to
the undirected problem (although we have not tried seriously).

For noncombinatorial algorithms, our reduction implies a lower bound of $\Omega((mn)^{d\omega/6-\delta})$, under the standard hypothesis that the $k$-clique problem for graphs with $\nG$ vertices requires $\Omega(\nG^{d\omega/3-\delta'})$ time,
where $\omega\in [2,2.373)$ denotes the matrix multiplication exponent.
Proving better conditional lower bounds for noncombinatorial algorithms remains open.
This might require further new techniques, as we currently do not have
tight conditional lower bounds for the original Klee's measure problem for noncombinatorial
algorithms for $d\ge 4$~\cite{Kun22}.

As mentioned, Bringmann and Nusser~\cite{BringmannN21} proved a near-$mn$ lower bound for $d=2$ 
under the OV Hypothesis; their result is in some sense
more robust (it holds for noncombinatorial algorithms) and 
applies also to the $L_2$ case (and $L_p$ for any $1\le p\le\infty$). However,
the problem for $L_2$ probably has higher complexity than for $L_\infty$: the best upper bounds are near $n^3$ for $d=2$ and near $n^5$ for $d=3$~\cite{HuttenlocherKS93}, and
near $n^{\up{3d/2}+1}$ for $d\ge 4$~\cite{ChewDEK99}, in the $m=n$ case.
(See Bringmann and Nusser's paper for a 3SUM-based lower bound for the $L_2$ problem for $d=2$
in the ``unbalanced'' case when $m$ is constant.)

\PAPER{\small}
\bibliographystyle{plainurl}
\bibliography{pt_match}

\end{document}